\newtheorem{theorem}{Theorem}
\begin{document}

\title{TrueReview:\\ A Proposal for Post-Publication Peer Review \\ 
{\small \sc white paper}}
\author{Luca de Alfaro \\ Computer Science department \\ University of California, Santa Cruz \\ luca@ucsc.edu \and 
    Marco Faella \\ Electrical Engineering and Information Technologies \\ University of Naples ``Federico II'', Italy \\ m.faella@unina.it}
\date{Technical report UCSC-SOE-16-13 \\ \today}

\maketitle

\sloppy

\def\qcur{q^\mathit{future}}
\def\qprev{q^\mathit{past}}
\def\qtrue{q^\mathit{true}}
\def\goto{\rightarrow}
\def\avg{{\mbox{{\rm avg}}}}
\def\truthf{\theta}
\def\inform{\delta}
\def\bonus{B}
\def\flin{f^\mathit{lin}}
\def\fsigm{f^\mathit{S}}
\def\E#1{{\mbox{E}}\left[ #1 \right]}
\def\avg{{\mbox{{\rm avg}}}}
\def\norm{\mathcal{N}}

\begin{abstract}
In {\em post-publication peer review,} scientific contributions are first published in open-access forums, such as arXiv or other digital libraries, and are subsequently reviewed and possibly ranked and/or evaluated.  Compared to the classical process of scientific publishing, in which review precedes publication, post-publication peer review leads to faster dissemination of ideas, and publicly-available reviews. The chief concern in post-publication reviewing consists in eliciting high-quality, insightful reviews from participants. 

We describe the mathematical foundations and structure of TrueReview, an open-source tool we propose to build in support of post-publication review.
In TrueReview, the motivation to review is provided via an incentive system that promotes reviews and evaluations that are both {\em truthful\/} (they turn out to be correct in the long run) and  {\em informative\/} (they provide significant new information).  TrueReview organizes papers in {\em venues,} allowing different scientific communities to set their own submission and review policies. These venues can be manually set-up, or they can correspond to categories in well-known repositories such as arXiv.  The review incentives can be used to form a \emph{reviewer ranking} that can be prominently displayed alongside papers in the various disciplines, thus offering a concrete benefit to reviewers. The paper evaluations, in turn, reward the authors of the most significant papers, both via an explicit paper ranking, and via increased visibility in search.
\end{abstract}

\section{Introduction}

Peer review has not always preceded publication. 
In the times of Galileo, Newton, van Leeuwenhoek, up to Darwin, scientists would share their results via letters or presentations to scientific societies; the results were then discussed among scientists. 
The current system of pre-publication peer review was widely adopted only relatively recently, starting in the 1940s with the introduction of large-circulation scientific journals \cite{spier_history_2002}.
Pre-publication peer review was shaped by the economics of paper journal publishing: as paper journals are slow and expensive to print and ship, peer-review was used to select which articles deserved wide dissemination.

The economics of publishing is very different now.
Information nowadays can be disseminated immediately at very low 
cost, and furthermore, in a manner that makes it open to social
interaction: in blogs, wikis, forums, social networks, and other
venues, people can both share and comment on information. 
Yet, for the most part the scientific community still beholds pre-publication peer review as the officially anointed method of disseminating results.
Publication in venues such as journals and conferences with a pre-publication peer-review selection process is also the most commonly used measure of scientific productivity, and contributes to shape the near totality of academic and research careers. 

Pre-publication peer review has several drawbacks. 
One of the most salient is the delay imposed on the dissemination of
results. 
In a typical computer science conference, six months may elapse from
submission to publication in the proceedings, and this assumes that
the conference deadline came just when the paper was ready for
submission, and more importantly, that the paper was accepted. 
To avoid this delay, many authors submit the paper to open
repositories such as arXiv\footnote{http://arxiv.org/} at the same
time as they submit it to a conference or journal. 
While this makes it available to other researchers, a submission to
ArXiv does not come with the all-important blessing of peer review. 
As such, papers submitted to arXiv are not generally counted as part of the productivity record of researchers. 
Submitting to arXiv is not an alternative for submissions to
conferences or journals with double-blind review policies, and citing
works submitted in arXiv, but not yet peer-reviewed, is not universally perceived as appropriate in science.

A related issue is the one of selection.  
The current process of scientific reviewing, consequently, aims at
deciding which papers to accept for publication, and which to reject.  
Correctness is only one factor in such a decision: commonly, there are
many more correct submissions (in the sense of exempt from scientific
errors) than can be accepted, and the decision to accept or reject is
motivated by judgements on the significance of the submissions.  
The paper acceptance process is thus of necessity an uncertain
process, where a demarcation line needs to be drawn among papers of
fairly similar apparent significance.  
Papers that present correct results, but which do not make the cut,
are subjected to a delay as they are re-submitted to different
journals of conferences.  
The process is slow and wasteful of resources.  

One last drawback of pre-publication peer review is that papers are not
presented to readers in the context of the accumulated knowledge and
judgement.  
While this shields papers from being presented alongside potentially irrelevant reviews, this also means that insightful observations from readers and researchers cannot help understand papers and put them in context.  

This white paper presents the design principles and mathematical foundations of TrueReview, an open-source system we propose to build in support of post-publication peer-review. 
In the next section, we describe the overall motivations and design principles that inspire the development of TrueReview. 
After a review of related work, we discuss the problem of motivating reviewers, and we describe in Section~\ref{sec:incentives} the reviewer incentive system at the heart of TrueReview. 
The main challenge in a review system consists in ensuring that all papers receive sufficient and precise evaluations.
Our novel incentive scheme promotes reviews that are both truthful and {\em informative,} in the sense that they bring novel information into the system rather than merely confirming what is already known.
To validate the proposed incentive scheme, we report in Section~\ref{sec:sim} the result of simulations of the review process with participants having a varying distribution of skills and paper topic expertise.
The simulations show that the incentive system is effective in ensuring that all papers receive precise evaluations.
We conclude with an overview of the software architecture of TrueReview, and a discussion of some key implementation decisions. 

All the code for TrueReview is open source, and it can be found at \url{https://github.com/TrueReview/TrueReview}.

\section{TrueReview Design Principles}

TrueReview will be an open, on-line system, where authors can publish their papers or enter links to their papers, and where reviewers can review and evaluate the papers after they have been published. 
This would serve the scientific community as a whole, by making the
dissemination of results more open, predictable and less subject to
delays, and by helping researchers view papers in light of the
accumulated knowledge and wisdom.  
At the core of TrueReview are the following design principles.

\paragraph{Driven by scientific communities.}
TrueReview does not plan to be a one-size-fits-all solution for post-publication review. 
Papers published or linked in TrueReview will not be all put into the same ``pot'' for review and ranking. 
Each scientific community has norms for the format of published papers, and has well-known researchers that act as standard-bearers for the community: these are the people that today serve on the journal editorial boards and conference program committees. 
Each community that elects to use TrueReview will decide whether papers are to be submitted to TrueReview directly, or whether TrueReview tracks paper submitted to certain categories in open-access repositories such as Arxiv.
The choice of who can review papers in a venue will also be left to each community.
In some venues, the senior members might wish to approve who has review priviledge, or adopt an invitation system.
For other venues, such as venues that correspond to Arxiv categories, it might be sufficient to have published a paper in the same venue to be able to review.

\paragraph{No delays to publication.}
While papers that have just been submitted are unreviewed, this should
not prevent their circulation. 
One natural objection is whether making papers available immediately
deprives readers from the quality guarantee conferred by a formal
process of paper review. 
We believe that the benefits of the prompt communication of scientific
results far outweigh the drawback of circulating papers in various
stages of review. 
The status of a peer reviewed paper is often assumed by people not
familiar with the process to be a seal of approval that guarantees the
correctness of the results. 
In reality, errors in scientific papers are not always discovered by
the conference or journal review committees to which the papers are
submitted: more often, the errors are discovered by the authors
themselves, or by people who try to use or extend the papers results. 
Only papers that are widely read, and whose results are extensively
used, can be trusted to be highly likely to be correct. 

\paragraph{Rank rather than select.}
When a paper is submitted to a journal or conference, the question
of whether to accept or reject it most often revolves on the
relevance of the paper, rather than on its correctness. 
After the papers that are clearly flawed are eliminated, there are
invariably too many papers to fit in the conference or journal format;
the committee must then select the papers to accept on the basis of
their quality. 
The committee thus essentially performs a ranking task, applying then
a binary threshold dictated by conference or journal constraints. 
For the rejected papers that were indeed correct, this process results
in a pointless delay to publication; as these are typically
resubmitted to other venues, the work that went into ranking them is also wasted.

This summary of the current review process is greatly simplified. 
In truth, there are many conferences and journals, with different
typical quality levels, and authors choose the venue where to submit
the paper in order to compromise between the prestige of the venue,
and the probability that the paper is accepted. 
Nevertheless, the process is wasteful of time and work.
We believe it would be better to use the reviews and comments for
ranking, rather than for selection. 
There would be no need to artificially set a cut-off line; all papers
would be ranked and available on-line as soon as the authors publish them.

Once a ranking of the papers were available (even if approximate),
journals and conferences could use the ranking for selection
purposes. 
For example, a conference could gather people interested in a
particular field, and allocate paper presentation slots to the 30
highest-ranked papers of the year, and poster presentation space to
the next 50 highest-ranked; a journal or book editor could similarly
publish (and distribute to libraries in archival form) the best 50
papers of each year. 
Certainly many users of the system could use the ranking for selection
purposes, but the main goal of the system would be to generate a
ranking, not a selection.

\paragraph{Truthful and informative incentives to review.}
The main obstacle to post-publication review consists in enlisting expert reviewers, and having them provide accurate ratings and reviews on most papers in a venue.
In conference and journals, the enticement to review is provided by the prestige of appearing on the program committee or editorial board of a well known journal or conference. 
Reviewers need to provide competent reviews, or risk appearing uninformed when their reviews are compared with those of others on the program committee or editorial board. 
In exchange for being listed as members of the program committee or editorial board, the reviewers also accept to read and review papers that they would not have read out of their interest alone. 

We plan to recreate the incentive to review by also giving wide publicity to the most active reviewers, and by attributing merit for the reviews via an incentive system that prizes reviews that both are correct, and provide new information.
In each venue, the names of the reviewers that accrued highest review merit will be displayed in the first page, alongside the top-rated papers. 
Reviewers will be able to link their name to a web page of their choice, such as their academic home page.
We hope this visibility and mark of distinction, which mirrors the one currently offered by membership in program committees and editorial boards, will provide sufficient motivation 
to actively participate in the system. 

The incentive system for reviews will reward reviewers who provide rating that are both {\em truthful\/} and {\em informative.}
{\em Truthful\/} ratings are those that will be confirmed later on by the consensus opinion on a paper. 
{\em Informative\/} ratings are those that provide genuinely new information. 
Examples of informative ratings are the first rating for a previously unreviewed paper, or a rating that differs from the current consensus for a paper, but will be later confirmed to be correct. 
In contrast, a rating and review that reflects the consensus on a paper that has already been reviewed many times will have low informative value.
Considering both truthfulness and informativeness to reward reviewers encourages them to focus where their expertise allows them to give new useful information.
This combined incentive should also lead to prompt rating of papers published in venues. 

Additionally, for venues with a long life-span (such as venues replicating arXiv categories),
users can be encouraged to periodically contribute new reviews by slowly decreasing their
accrued score, as long as they do not provide a new review.

\section{Related Work}

A detailed proposal for a post-publication peer-review model was made by Kriegeskorte \cite{openeval12}. 
The author advocates signed reviews and multi-dimensional
paper evaluations, that can be aggregated by different interested parties in different ways.
The dissemination of signed reviews is deemed a sufficient incentive for reviewers to participate in the process.
The paper contains also an in-depth analysis of the benefits of post-publication peer review, which are presented in an eloquent way and which are indeed part of the motivation for this study on incentives.
The incentive schemes we propose do not require review authors to be publicly visible.
This may be beneficial, as there is some evidence that signed reviews may deter prospective reviewers, or dampen the frankness of their opinions \cite{van_rooyen_effect_1999,van_rooyen_effect_2010}.
The virtues, and drawbacks, of signed reviews have been described in \cite{groves_is_2010,khan_is_2010}; signed reviews can prevent the abuse of review power, but they also can stifle criticism.

The proposal for TrueReview shares its fundamental motivations with \cite{das_sarma_crowdsourcing_2011}, of which it represents an evolution, as well as with \cite{openeval12}, while differing in the details of the incentive system.
The incentives we propose do not rule out publishing the names of review authors. 
We propose instead listing, for each publication venue, both the top papers, and the top reviewers as determined by the total of their review bonuses, allowing users of a Web interface to search both papers, and reviewers. 
Post-publication peer review has also been advocated, on similar grounds as \cite{das_sarma_crowdsourcing_2011,openeval12}, in \cite{hunter_post-publication_2012,herron_is_2012,da_silva_need_2013}.
Even the popular press has engaged in the discussion \cite{wired-incentives14},
with the CEO of Academia.edu\footnote{\tt http://www.academia.edu} mentioning the possibility of gathering reputation points
from reviews.

The idea of evaluating scientific proposals via crowdsourcing reviews and ratings has been proposed as a method for adjudicating telescope time, a central issue in Astronomy \cite{merrifield_telescope_2009}, as well as in the evaluation of some National Science Foundation proposals \cite{national_science_foundation_dear_2013}.

ArXiv overlay journals are gaining momentum in several scientific disciplines,
including math, physics, and computer science~\cite{overlays16}.
While their papers are publicly available even before acceptance,
their selection process follows the traditional peer review model of printed journals.
In the words of Timothy Gowers, Fields medalist and managing editor of the
arXiv overlay journal Discrete Analysis, ``our journal is very conventional [...]
But if the model becomes widespread, then I personally would very much like to see more-radical ideas tried out as well''~\cite{gowers13}.

Other organizations are indeed pursuing more radical ideas:
ScienceOpen\footnote{\tt http://www.scienceopen.com} publishes articles online under
an open-access model and encourages post-publication peer reviews, which
include a numerical score.
Reviews are publicly attributed to their authors and even assigned a DOI.
On the other hand, reviewers do not accrue a numerical reputation for their efforts.
Similarly to~\cite{openeval12}, the incentive for the reviewers consists in having a public collection
of their reviews.

O'Peer\footnote{\tt http://opeer.org} is a proof-of-concept website where authors-reviewers accrue 
reputation (called \emph{credibility})
according to both their publication record and the quality of their reviews.

An incentive system that shares many of the design goals with the one we propose for TrueReview has been proposed by Bhattacharjee and Goel \cite{bhattacharjee_algorithms_2007} in their work on incentives for robust ranking in online search.
In the \cite{bhattacharjee_algorithms_2007} proposal, users can place tokens on items in order to place wagers on the quality of the items, much as people can bet on horses at races.
If the ratio between the qualities of two items is different from the ratio between the token amounts, an {\em arbitration opportunity} arises, and a user can move a token from the over-rated item to the under-rated one and gain reputation (an operation that is roughly equivalent to betting a negative dollar on a horse and a positive dollar on another, if negative bets were allowed).
The incentive scheme is truthful, as the incentive is to bring token counts in direct proportionality with qualities, and it also promotes informativeness, as the biggest arbitration opportunities occur for the papers that are most under-valued. 
We made various attempts at adapting \cite{bhattacharjee_algorithms_2007} for post-publication review, before finally opting for the grade-based scheme we propose for TrueReview. 
The main problem we encountered is the  slow start in properly ranking new papers. 
When a paper is added, initially it has no tokens. 
If users can place or move one token at a time, a good paper will require many reviews to receive a proper ranking; if users can move many tokens at once, the vandalism of a single user can cause considerable damage.
Another issue was that the truthfulness and innovativeness incentives are tied together by the arbitration opportunity, and their strengths cannot be independently tuned. 
Ultimately, we felt that the approach proposed in this paper was more flexible and allowed us to better control vandalism. 
We can independently tune the truthfulness and informativeness incentives, and we can adopt a number of aggregation strategies for reviews.

Another related line of inquiry focuses on eliciting honest feedback in the absence of ground truth.
Miller et al.\ \cite{miller05} show that a simple scheme 
based on proper scoring rules~\cite{cooke91} induces reviewers to provide their honest opinion
(similarly to our Theorem~\ref{thm:nash1}).
Along the same line, Jurca and Faltings~\cite{jurca09} study scoring systems that are resistant
to collusions, whereas Dasgupta and Ghosh~\cite{dasgupta13} consider the scenario in which reviewers
can strategically calibrate the amount of effort spent for a review.
None of these works model the \emph{selection} of items by reviewers, which is instead
one of our main concerns.

The idea of validating assertions by considering them wagers on future value, and rewarding thus their accuracy, is the principle at the basis of {\em prediction markets\/} \cite{wolfers_prediction_2004,tziralis_prediction_2012}. 
The arbitration opportunities in prediction markets are in fact conceptually similar to those in \cite{bhattacharjee_algorithms_2007}, except that by having real money involved, the possibility for vandalism is virtually eliminated. 
Indeed, the stock market offers a model for crowdsourcing valuations that both is truthful, and that offers a prize for informativeness.
However, the full working of the market (including the put and call options that are important in betting on future valuations) are vastly more complex than the simple mechanism we presented in this paper, and arguably over-complicated for the task at hand.

There has been much work on peer evaluation, in classroom settings \cite{gehringer_strategies_2000,gehringer_electronic_2001,robinson_calibrated_2001,sadauskas_critviz:_2013} and in MOOCs \cite{piech_tuned_2013}. 
In a classroom or MOOC setting, however, the focus is on obtaining precise and fair evaluations, rather than on incentives to select the items (papers, or homework submissions) to review. 
This because in educational settings, students are usually compelled to perform the peer reviews and evaluations as part of their class work.
Furthermore, as homework submissions share all the same topic, the review assignment can be (and usually, is) performed automatically, again obviating the need for an informative incentive system.

\section{The TrueReview Incentive System for Reviewers} 
\label{sec:incentives}

The crucial challenge for post-publication review consists in ensuring that papers receive adequate reviews and precise evaluations.
There are many mechanisms for ensuring that the set of potential reviewers is capable of writing useful reviews: they can be invited to review, or the privilege of reviewing can be granted automatically to people who have successfully published previously in the same venues. 

The basic user action in TrueReview consists in a user choosing a paper, and providing both a written review, and a numerical rating for the paper. The ratings are then aggregated in a single rating for the whole paper.
TrueReview rewards the author of a review with a review ``bonus''.
In each publication venue, reviewers will be listed according to the total of the bonuses they received: we hope this visibility will provide incentive to review.

The incentive scheme used for assigning the review bonuses should be truthful: the strategy for users to maximize their bonuses for each review should be to express their honest opinion about the paper. 
Furthermore, the incentive scheme should be {\em informative:\/} it should prize new relevant information over repetition of already-known information. 
For instance, it should value the first review on a paper more than a review confirming the consensus opinion on a paper that has already been reviewed many times.
Among papers having the same number of reviews, an informative incentive scheme should value reviews that express opinions different from the consensus, and that will turn out to be correct, more than reviews that are simply confirming the current consensus. 
Informative incentive schemes lead to a quick convergence to the true valuation for all papers. 

\subsection{Informativeness and Accuracy of a Review}

We introduce an incentive system for reviewers that is both truthful and informative. 
Consider the sequence of ratings $x_0, x_1, x_2, \ldots, x_n$ that have been assigned, in chronological order, to a given paper, where $x_0$ is the default rating that is assigned by the system to every paper who is added to the system, as a starting point. 
To define the bonus $\bonus_i$ for the author of rating $x_i$, for $0 < i < n$, let 
\begin{align*}
\qprev_i &= \avg\set{x_0, x_1, \ldots, x_{i-1}} \\
\qcur_i  &= \avg\set{x_{i+1}, x_{i+2}, \ldots, x_n}
\end{align*}
be the averages of the ratings preceding and following $x_i$, respectively.
Let $L$ be the quadratic loss function, defined by $L(a,b) = (a-b)^2$.
We define the {\em accuracy loss\/} and {\em informativeness\/} of the rating $x_i$ as follows:
\begin{align}
\label{eq-informativeness}
& \mbox{Informativeness:}  & \inform_i & = L(\qprev_i, \qcur_i)
\\[1ex]
\label{eq-accuracy} 
& \mbox{Accuracy loss:}  & \truthf_i & = L(x_i, \qcur_i)
\end{align}
To reward reviewers which are {\em both\/} informative and accurate, we let the review bonus $b_i$ be: 
\begin{equation} \label{eq-bonus}
    b_i = \inform_i \cdot \fsigm_{\alpha,M}(\truthf_i)
\end{equation}
where $\fsigm_{\alpha,M}$ is a sigmoidal function parameterized by a parameter $\alpha > 0$, and by the maximum rating $M$ that can be given to a paper (see Figure~\ref{fig:alphas}).
The smaller $\alpha$ is, the stronger the incentive for accuracy is.
The sigmoidal function is such that $\fsigm_{\alpha,M}(0) = 1$, and $\fsigm_{\alpha,M}(M^2) = 0$, so that perfectly accurate reviewers will get their full bonus, and reviewers with the maximum possible value $M^2$ of accuracy loss will not get any bonus.

\begin{figure}[th]
\begin{center}
\begin{tikzpicture}[scale=2.5]
\colorlet{darkred}{red!65!black}
\colorlet{darkblue}{blue!55!black}
\draw (0,0) node[below left] {$0$};
\draw[very thin,color=gray] (-.03,-0.05) grid (3.05,1.05);
\draw (0,0) -- (0,1) node[left] {$1$};
\draw (-0.1,0) -- (3,0) node[below] {$M^2$};
\draw[->] (3,0) -- (3.1,0) node[right] {};
\draw[->] (0,-0.2) -- (0,1.1) node[above] {};
\node at (1.5, -0.1) {$\truthf_i$};
\draw[color=darkred, domain=0.1:0.7] plot (\x,{1/(1+exp(-0.2*(3-\x)/\x+\x/(0.2*(3-\x))))}) 
                                     node[left] {{\small $\fsigm_{0.2,M}(\truthf_i)$}};
\draw[color=darkred, domain=0.1:1.9] plot (\x,{1/(1+exp(-0.2*(3-\x)/\x+\x/(0.2*(3-\x))))});
\draw[color=black, domain=0.4:1.7] plot(\x,{1/(1+exp(-(3-\x)/\x+\x/(3-\x)))}) 
                                   node[left] {{\small $\fsigm_{1,M}(\truthf_i)$}};
\draw[color=black, domain=0.4:2.7] plot(\x,{1/(1+exp(-(3-\x)/\x+\x/(3-\x)))});

%
\draw[color=darkblue, domain=1.1:2.4] plot (\x,{1/(1+exp(-5*(3-\x)/\x+\x/(5*(3-\x))))})
                                     node[right] {{\small $\fsigm_{5,M}(\truthf_i)$}};
\draw[color=darkblue, domain=1.1:2.9] plot (\x,{1/(1+exp(-5*(3-\x)/\x+\x/(5*(3-\x))))});
\draw[color=darkred] (3,0) -- (1.9,0);
\draw[color=black] (3,0) -- (2.7,0);
\draw[color=darkblue] (3,0) -- (2.9,0);
\draw[color=darkblue] (0,1) -- (1.1,1);
\draw[color=black] (0,1) -- (0.4,1);
\draw[color=darkred] (0,1) -- (0.1,1);
\end{tikzpicture}
\end{center}
\caption{Three instances of $\fsigm_{\alpha,M}(x)$, for $\alpha = 0.2$, $\alpha = 1$ and $\alpha = 5$.
\label{fig:alphas}}
\end{figure}
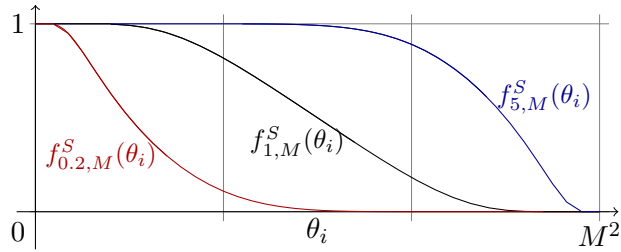

The informativeness, accuracy loss, and bonus defines by (\ref{eq-informativeness})--(\ref{eq-bonus}) have a number of important properties. 

Informativeness provide an incentive to select papers whose current evaluation is most different from what the future consensus will be.
Crucially, the informativeness depends on the {\em previous\/} and {\em future\/} ratings of the paper, but not on the rating $x_i$ assigned by the reviewer under consideration. 
Thus, once the reviewer selects a paper to review, informativeness plays no further role, and the bonus depends entirely on the accuracy loss. 
This decouples choice of paper from accuracy, and will be crucial in proving that the incentive scheme is truthful. 

The accuracy loss is computed by comparing the user's rating only with {\em future\/} valuations.
This eliminates any incentive to provide a valuation that is similar to the known past ones, against the user's true belief about the paper. 
Consider, for instance, an alternative definition where $\qcur$ represents the average of all valuations.
If the user has some prior knowledge about the typical number of reviews a paper is likely to receive, and most of those have already been written, then the user would be able to achieve high accuracy just by providing a rating that is the average of the past ones. 

A consequence of our choice of incentive scheme is that we reward users who discover new information, and present it convincingly in their reviews.
For such users, the score they propose is different from past ones, and influences all future ratings. 
The informativeness will be high, due to the difference between past and future ratings, and the accuracy will not suffer from the difference between the score and previous ratings. 
Notice how this latter property would not hold if we included all ratings in the computation of $\qcur$.

\subsection{Truthfulness} \label{sec:truthful}

Our incentive schemes are not truthful in the strong sense that it is a weakly dominating strategy for players to give ratings that reflect their true opinion of the paper.
There are many collusion schemes that form Nash equilibria where deviating reduces the bonus: for instance, the one where all reviewers provide the same, constant, rating. 
Since there is no ground truth for paper quality independent from reviewer-provided ratings, the inability to ensure that truthful strategies are weakly dominating is unavoidable: reviewers could agree to review a paper as if it were another one; there is nothing intrinsic that ties the reviewer behavior to the paper being reviewed that can be used in the mechanism design.  
The best we can show is that under some conditions, being truthful is a strict Nash equilibrium, that is, a Nash equilibrium from which deviating is not favorable. 

The rating process can be modeled as a Bayesian game \cite{osborne_course_1994}, in which each user $i$ can observe the ratings $x_1, x_2, \ldots, x_{i-1}$ given by previous users to a given paper, as well as their own belief $y_i$ about the paper quality $\qtrue$. 
On the basis of these observations, user $i$ must in turn provide a rating $x_i$ for the paper. 
In formulating our results, we assume that the private estimate $y_i$ of the quality of the paper available to player $i$ is unbiased, and that estimates of different reviewers are uncorrelated. 
The assumption that the user estimates are not overall biased is unavoidable: 
there is no way to distinguish between a paper of quality $\qtrue$, and a paper of quality $\qtrue - \Delta$ for some $\Delta > 0$, 
which all users over-appreciate on average by an amount $\Delta$. 
Put another way, there is no way to differentiate good papers from papers that only seem good to all users: the two notions coincide in our system.

In the following theorems, we use the fact that the bonus received by the $i$-th user is a combination $b_i = F(\truthf_i, \inform_i)$, where $F$ is monotonically decreasing in $\truthf_i$ and monotonically increasing in $\inform_i$. 
As $\inform_i$ is fully determined by the choice of paper, to reason about truthflness, we can reason on the $\truthf_i$ component only.

Our first result concerns reviewers who rate papers without access to other reviews. 
In this case, it is easy to show that being truthful is a strict Nash equilibrium.

\begin{theorem}{} \label{thm:nash1}
Assume all users form statistically uncorrelated and unbiased estimates of the quality of each paper, and assume that users enter their review without being able to read other reviews first.
Then, the strategy profile under which all users rate the paper with their quality estimates is a strict Nash equilibrium.
\end{theorem}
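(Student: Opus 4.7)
The plan is to fix an arbitrary reviewer $i$, suppose every other reviewer $j \ne i$ plays the truthful strategy $x_j = y_j$, and show that reviewer $i$'s expected bonus is strictly maximized by also choosing $x_i = y_i$. The crucial structural observation, already highlighted in the preamble to the theorem, is that $\inform_i$ depends only on $x_0,\ldots,x_{i-1}$ and on $x_{i+1},\ldots,x_n$, neither of which is affected by $x_i$: since reviewers cannot see previous reports, the subsequent $x_{i+1},\ldots,x_n$ are independent of reviewer $i$'s action. Because $\bonus_i = \inform_i \cdot \fsigm_{\alpha,M}(\truthf_i)$ with $\inform_i \geq 0$ and $\fsigm_{\alpha,M}$ strictly decreasing, the problem reduces to making the conditional distribution of $\truthf_i = (x_i - \qcur_i)^2$ given $y_i$ small.

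Next I would exploit the statistical assumptions. Under the truthful profile, $\qcur_i = \tfrac{1}{n-i}\sum_{j > i} y_j$. Treating reviewer $i$'s own estimate as their best posterior estimate of $\qtrue$, and using unbiasedness and uncorrelatedness, $\E{\qcur_i \mid y_i} = y_i$. The bias--variance identity
\begin{equation*}
\E{(x_i - \qcur_i)^2 \mid y_i} = (x_i - y_i)^2 + \mathrm{Var}(\qcur_i \mid y_i)
\end{equation*}
then exhibits $x_i = y_i$ as the unique minimizer of the expected accuracy loss.

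The last step is to transfer this from $\E{\truthf_i \mid y_i}$ to $\E{\fsigm_{\alpha,M}(\truthf_i) \mid y_i}$, which is the actual payoff. Assuming the conditional distribution of $\qcur_i - y_i$ given $y_i$ is symmetric around $0$ (which follows, for instance, from symmetric i.i.d.\ estimation errors), for any alternative $x_i' \neq y_i$ the random variable $(x_i' - \qcur_i)^2$ first-order stochastically dominates $(y_i - \qcur_i)^2$; applying the strictly decreasing function $\fsigm_{\alpha,M}$ and taking expectations then delivers the strict inequality needed for a strict Nash equilibrium. I expect this last step to be the main obstacle: because the sigmoid is neither concave nor convex, one cannot short-circuit the argument via Jensen's inequality or the classical proper-scoring-rule calculus, and making the stochastic-dominance step precise requires a bit more structure on the noise distribution than the bare ``unbiased and uncorrelated'' hypothesis. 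The remaining components --- the decoupling of $\inform_i$ from $x_i$ and the bias--variance decomposition --- are essentially routine.
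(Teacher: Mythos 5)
Your core argument coincides with the paper's: you decouple $\inform_i$ from $x_i$ exactly as the paper does in the preamble to its theorems, and your identity $\E{(x_i-\qcur_i)^2 \mid y_i} = (x_i-y_i)^2 + \mathrm{Var}(\qcur_i \mid y_i)$ is the same orthogonal decomposition the paper writes as $v_i + v_f$ (the paper centers it at $\qtrue$ rather than conditioning on $y_i$, but the cross term vanishes for the same reason --- unbiased, uncorrelated estimates --- and the conclusion $x_i = y_i$ follows identically). Where you genuinely diverge is your third step: the paper never passes from the expected accuracy loss $\E{\truthf_i}$ to the expected bonus $\E{\fsigm_{\alpha,M}(\truthf_i)}$; it simply declares beforehand that, since the bonus is monotonically decreasing in $\truthf_i$, one may ``reason on the $\truthf_i$ component only,'' and its proof stops at minimizing $\E{\truthf_i}$. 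You are right that this is a gap for a nonlinear $\fsigm$, and your stochastic-dominance patch is a legitimate way to close it --- though note that symmetry of the error distribution alone does not yield first-order stochastic dominance of $(x_i'-\qcur_i)^2$ over $(y_i-\qcur_i)^2$ (a symmetric two-point error law is a counterexample); you need symmetry plus unimodality, e.g.\ Gaussian errors as in the paper's simulations. One further caution applying equally to your write-up and the paper's: the claim $\E{\qcur_i \mid y_i} = y_i$ only makes sense under reviewer $i$'s subjective posterior in which $y_i$ is the mean of $\qtrue$; under the frequentist reading with $\qtrue$ fixed, $\E{\qcur_i} = \qtrue$, and the paper's phrase about ``voting the best estimate $y_i$ of $\qtrue$'' carries the same implicit Bayesian gloss.
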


\begin{proof}
Let $0 < i < n$, and assume $x_j = y_j$ for $0 < j \leq n$, $j \neq i$, so that all users except the $i$-th rate papers with their individual estimate, and consider the point of view of the $i$-th user.
The user must minimize $\truthf_i$. 
As the private estimates $\set{y_j}_{0 < j \leq n}$ are uncorrelated, the expected value of $\truthf_i$ can be written as the sum of two variances $v_i + v_f$, where $v_i$ is the variance of $x_i$ with respect to the true value $\qtrue$ of the paper, and $v_f$ is the variance of $\qcur_i$ with respect to $\qtrue$. 
As user $i$ has no influence over $v_f$, the user must minimize $v_i$, and this entails voting the best estimate $y_i$ 
of $\qtrue$ available to the player, so that $x_i = y_i$.
\end{proof}

We can extend this result to the case in which reviewers can read previous reviews, and adjust their submitted ratings according to the previous ratings for the paper. 
Consider again users $1, 2, 3, \ldots$, with private uncorrelated estimates $y_1, y_2, y_3, \ldots$,
whose expected value is the quality $\qtrue$ of the paper.
We assume for simplicity that these private estimates all have the same variance $v$
(the general case is similar, and can be obtained by weighing each estimate with the inverse of its variance). 
In a truthful strategy profile, each user will report the most precise estimates that can be computed from the private information and from the previous ratings. 
Thus, user 1 will report $x_1 = y_1$, user 2 will report $x_2 = (x_1 + y_2)/2 = (y_1 + y_2) / 2$, and in general, user $n$ will report $(n-1)x_{n-1}/n + y_n / n = (y_1 + y_2 + \cdots + y_n) / n$. 
The next theorem shows that deviating from this truthful strategy yields a lower bonus, so that the truthful strategy profile is a Nash equilibrium. 

\begin{theorem}{} \label{thm:nash2}
If reviewers have access to previous reviews, and if their private estimates are uncorrelated, being truthful is a strict Nash equilibrium.
\end{theorem}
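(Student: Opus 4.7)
The plan is to adapt the variance-decomposition argument of Theorem~\ref{thm:nash1}, now accounting for the fact that, under truthful play, every subsequent reviewer $j>i$ folds the posted average into their own rating via $x_j = ((j-1)x_{j-1}+y_j)/j$, so a deviation by user $i$ propagates through all downstream ratings. As in Theorem~\ref{thm:nash1}, the principle that $\inform_i$ is determined by the choice of paper lets us focus on minimizing the expected accuracy loss $\E{\truthf_i}$; since the bonus $F(\truthf_i,\inform_i)$ is monotonically decreasing in $\truthf_i$, maximizing the bonus is equivalent to minimizing this quantity.

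First I would unroll the downstream truthful recursion. A one-step induction shows that, if users $i+1,\dots,n$ follow the truthful rule, then
\[
x_j \;=\; \frac{i\, x_i + y_{i+1} + \cdots + y_j}{j}
\qquad\text{for every } j \ge i.
\]
Averaging these over $j=i+1,\dots,n$ yields a clean linear decomposition $\qcur_i = A\, x_i + Z$, where $A = \tfrac{i}{n-i}\sum_{j=i+1}^{n}\tfrac{1}{j}$ and $Z$ is a function of the future private estimates $y_{i+1},\dots,y_n$ alone. A short check using $i/j<1$ for each $j>i$ gives $A<1$, which will be essential for obtaining a strict inequality.

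With this decomposition in hand, $\truthf_i = ((1-A)x_i - Z)^2$. Because $x_i$ is built from $y_1,\dots,y_i$ while $Z$ depends only on $y_{i+1},\dots,y_n$, the uncorrelatedness assumption makes $x_i$ and $Z$ independent, with $\E{Z} = (1-A)\qtrue$. Expanding the square and taking the full expectation then gives
\[
\E{\truthf_i} \;=\; (1-A)^2\, \E{(x_i-\qtrue)^2} + C,
\]
where $C$ collects the $\Var(Z)$ and $\qtrue^2$ contributions outside user $i$'s control. Since $(1-A)^2 > 0$, minimizing $\E{\truthf_i}$ reduces to minimizing the mean squared error of $x_i$ as an estimator of $\qtrue$ built from $(y_1,\dots,y_i)$. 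Among unbiased estimators from equal-variance uncorrelated samples, the sample mean $(y_1+\cdots+y_i)/i$ is the unique minimum-variance choice, and this is exactly the truthful report $x_i = ((i-1)x_{i-1}+y_i)/i$; strictness follows from uniqueness of the MVUE together with $1-A>0$.

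The main obstacle I anticipate is the downstream propagation step: one must telescope the recursive updates of all future users carefully to obtain the linear form $\qcur_i = A\,x_i + Z$, and verify $A<1$. Once that structural identity is established, the remainder is a variance-decomposition argument directly parallel to Theorem~\ref{thm:nash1}.
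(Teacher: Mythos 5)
Your proposal is correct and rests on the same underlying idea as the paper's proof --- a variance decomposition showing that a deviation contributes a strictly positive quadratic term to the expected accuracy loss --- but you carry it out in substantially greater generality. The paper only treats the penultimate reviewer $n$ with a single future rating $x_{n+1}$, parameterizes the deviation by an additive constant $\Delta$, and computes $\E{(x_n-x_{n+1})^2}=\frac{v}{n(n+1)}+\frac{\Delta^2}{(n+1)^2}$ directly, remarking that the general case ``is similar but leads to more complicated notation.'' You actually supply that general case: the unrolled recursion $x_j=\bigl(i\,x_i+y_{i+1}+\cdots+y_j\bigr)/j$ and the resulting decomposition $\qcur_i=A\,x_i+Z$ with $A=\frac{i}{n-i}\sum_{j=i+1}^{n}\frac{1}{j}<1$ is exactly the missing structural identity, and it specializes to the paper's computation when $i=n$ (there $A=\frac{n}{n+1}$, giving the $\frac{\Delta^2}{(n+1)^2}$ coefficient). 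Two small points of rigor, at which you are no less careful than the paper but which are worth flagging: uncorrelatedness of the $y_j$ does not literally make $x_i$ and $Z$ \emph{independent} --- what you actually need, and what holds because $Z$ is linear in $y_{i+1},\ldots,y_n$ and $x_i$ is an affine function of $y_1,\ldots,y_i$ under a parametric deviation, is that the cross term $\E{(x_i-\qtrue)(Z-\E{Z})}$ vanishes; and the ``unique MVUE'' claim should be restricted to affine unbiased reports (Gauss--Markov), since among arbitrary estimators the sample mean is not the unique MSE minimizer. With those caveats stated, your argument buys a complete proof for every reviewer position $i$, where the paper's published argument covers only $i=n$.
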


\begin{proof}
Consider users $1, 2, \ldots, n, n+1$, with uncorrelated private estimates $y_i$. 
We show that it is optimal for user $n$ to be truthful; the general case for $1 \leq i \leq n$ is similar but leads to more complicated notation. 
If user $n+1$ plays truthfully, while player $n$ deviates from the truthful amount by $\Delta$, we have:
\[
  x_n = \frac{n-1}{n} x_{n-1} + \frac{y_n}{n} + \Delta
  \qquad
  x_{n+1} = \frac{n}{n+1} x_n + \frac{y_{n+1}}{n+1} \eqpun .
\]
The expected loss $\E{(x_n - x_{n+1})^2}$ is thus equal to
\begin{equation} \label{eq:exploss}
    \E{\Bigl(
    \frac{n-1}{n(n+1)} x_{n-1} + \frac{y_n}{n(n+1)} + \frac{\Delta}{n+1} - \frac{y_{n+1}}{n+1} 
    \Bigr)^2} \eqpun .
\end{equation}
Noting that $\E{x_{n-1}} = \E{y_n} = \E{y_{n+1}} = \qtrue$, 
we have that the coefficient of $\Delta$ in the expansion of (\ref{eq:exploss}) is
\[
    \frac{\qtrue}{(n+1)^2} \Bigl[ \frac{n-1}{n} + \frac{1}{n} - 1 \Bigr] = 0 \eqpun .
\]
Thus, (\ref{eq:exploss}) depends on $\Delta$ only via $\Delta^2 / (n+1)^2$.
Since $x_{n-1}$, $y_n$, and $y_{n-1}$ are mutually uncorrelated, we obtain from (\ref{eq:exploss}):
\[
  \E{(x_n - x_{n+1})^2} = \frac{v}{n(n+1)} + \frac{\Delta^2}{(n+1)^2} \eqpun ,
\]
where $v$ is the variance of one of the individual estimates $y_i$. 
Thus, user $n$ incurs minimum loss when $\Delta = 0$, showing that deviating from truthful behavior reduces the review bonus. 
Intuitively, any variation from the truth by one user only partially influences later users, raising the loss of the deviating user.
\end{proof}

\section{Simulations} \label{sec:sim}

We have shown in the previous section that the accuracy part of the incentive ensures that,
once a reviewer has chosen a paper to evaluate, it is in her best interest
to evaluate it honestly.
It remains to show that the informativeness term of the incentive 
encourages users to choose papers in a way that benefits the overall quality of the ranking.
We provide evidence in this direction through a set of simulations
in which a population of 1000 users evaluates a collection of 1000 papers.

We assume that each paper has an intrinsic quality $\qtrue$ which represents
our ground truth.
At any given time, the system attributes a current rating to each paper.
Such rating starts at zero and is updated as the arithmetic average
of the grades provided by the reviewers (including the initial default value of zero).

The reputation resulting from a review is defined by the bonus~\eqref{eq-bonus}.
The core component of the simulation is its \emph{user model},
dictating how simulated users choose a paper to review and a grade for it.
In particular, simulated users hold certain \emph{beliefs}
about the papers, which allow them to estimate the expected reputation boost
deriving from reviewing a certain paper.
Supported by the observations in the previous sections, 
we assume that users grade papers
truthfully, i.e., according to the best reconstruction allowed by the model.

\subsection{User Models}

We stipulate that each user is interested in a random sample of 100 papers out of the total 1000.
On each of those papers, the user initially holds the following beliefs:
the paper quality $z$ and the corresponding expected error $\sigma$.
One can think of $\frac{1}{\sigma}$ as the \emph{competence} of the user
on that paper, of which the user is self-aware.
Moreover, users are aware of the average error $\bar{\sigma}$ among all users and all papers.

Next, we describe how the above parameters are sampled.
The true value $\qtrue$ is sampled for each paper out of a normal distribution.
Then, each user is attributed a typical error $\sigma^t$ 
out of a distribution with mean $\bar{\sigma}$; the typical error 
indicates the ``overall competence'' of the user for the papers under consideration.
The paper-specific error $\sigma$ is sampled
from a distribution with mean $\sigma^t$.
Finally, the perceived paper quality $z$ is sampled out of a normal distribution 
with mean $\qtrue$ and standard deviation $\sigma$ (denoted by $\norm(\qtrue, \sigma)$).
Thus, we model users of varying degrees of average competence,
and with each a set of papers that they might consider reviewing. 

We present results for two user models. 
The models coincide on the original beliefs held by the users about the papers, but differ in the way users take into account previous reviews received by a paper.
In the first user model, users grade according to their belief, 
without taking into account previous reviews.
In the second user model, users revise their belief 
to take into account the grades in the previous reviews and their supposed accuracy.
In both models, each user starts with the belief $(z, \sigma)$ described above,
for each paper in which she is interested.

\paragraph*{First user model.}
In the first user model, reviewers believe their quality estimate $z$
to be the best estimate for the future consensus grade assigned by the system to that paper.
Accordingly, accuracy is simply estimated as $\sigma^2$ and informativeness as $(\qprev - z)^2$,
where $\qprev$ is the average of the previous grades,
leading to the reputation boost estimate
$$(\qprev - z)^2 \cdot \fsigm_{\alpha,M}(\sigma^2).$$
The above estimate is used to choose which paper to review.
Once a given paper is chosen to be reviewed, it will receive grade $z$.
This user model is consistent with the assumptions of Theorem~\ref{thm:nash1}.

\paragraph*{Second user model.}
In our second user model, users look at previous reviews 
to reconstruct via Bayesian inference the most likely grade for a paper. 
Based on these beliefs and taking into account previous reviews,
users estimate the reputation boost they may receive from evaluating a given paper.
Consider a paper with $n$ previous reviews and current evaluation $\qprev$.
Since a user does not hold a specific belief on the competence of previous reviewers, she assumes they all share the same error $\bar\sigma$.
Then, in this user model,  the best quality estimate $\hat{q}$ and the corresponding error $\hat{\sigma}$ are obtained by Bayesian inference with prior $\norm(z, \sigma)$ and observation $\qprev$ with likelihood $\norm(\qtrue,\frac{\bar{\sigma}}{\sqrt{n}})$.
The likelihood follows from assuming that previous reviewers adopted a similar Bayesian inference procedure, starting from statistically independent private beliefs.
Accordingly, the accuracy term of the incentive is estimated as $\hat{\sigma}^2$ and informativeness as $(\qprev - \hat{q})^2$, leading to the reputation boost estimate 
$$
(\qprev - \hat{q})^2 \cdot \fsigm_{\alpha,M}(\hat{\sigma}^2).
$$
Once the user chooses a paper, he will enter the grade $\hat{q}$ for it.  
This user model is consistent with the assumptions of Theorem~\ref{thm:nash2}, except that here each user is aware of its own variance ($\sigma$)
and assumes that all previous reviewers have the same variance $\bar\sigma$.
The same-variance assumption for previous reviewers is motivated by the fact that
we envision reviews to appear anonymous, so that users cannot infer the variance of a review
from the identity of its author.

\subsection{Choice of paper to review} \label{sec:choices}

At each round, a user is selected in round-robin fashion and performs a truthful review of a paper.
%
We compare three different scenarios in which users choose which paper to review in the following ways:
\begin{itemize}
\item {\bf Random:} uniformly at random among the papers known to the user.
\item {\bf Selfish:} the user chooses the paper that maximizes the estimated reputation boost,
as described in the user model, with a varying value of $\alpha$.
\item {\bf Accuracy:} the user chooses the paper that minimizes the estimated accuracy loss.
\item {\bf Informativeness:} the user chooses the paper that maximizes the estimated informativeness loss.
\item {\bf Optimal:} the user chooses the paper that maximizes global loss decrease,
assuming that she knows the real quality of all papers, but still grades according to her beliefs.
\end{itemize}
The ``random'' criterion is used to measure the performance of our incentive system compared to a system where users, lacking incentives, pick the paper they wish to review uniformly at random. 
The ``accuracy'' and ``informativeness'' criteria are used to show that an appropriate combination
of these two (a.k.a.\ selfish choice) is more effective than either of them separately.
The ``optimal'' criterion is deliberately irrealistic
and meant to serve as a reference for the fastest possible global loss decrease compatible with user beliefs about paper quality.

\subsection{Performance Measures}

Our first performance measure is the \emph{global loss} of the current quality estimates, computed
as the sum over all papers of the squared difference between the current 
paper quality estimate and the paper intrinsic quality $\qtrue$.

To illustrate how more expert reviewers receive more reputation (total review bonus points),
we also report in Figures~\ref{tab:corr1} and~\ref{tab:corr2}
the Pearson and Spearman correlations between user competence 
(the user-typical error $\sigma^t$ discussed above) 
and the reputation at the end of the experiment. 
Additionally, the fourth column in Figures~\ref{tab:corr1} and~\ref{tab:corr2}
reports the expected error incurred by a user when grading a paper, relative to the typical
error of that user. A value close to $1$, such as the one obtained by the random choice criterion,
implies that users select papers independently of their specific competences.
On the contrary, the lower the value the more users are choosing the papers they are more familiar with.
Since users in practice are likely to prefer those papers anyway, we see a low
value in that column as a desideratum for our incentive scheme.
The relative error values are averaged over rounds and data sets, 
and the last column in our tables displays the standard deviation over data sets.

\begin{figure*}[th]
\centering \includegraphics[scale=0.9]{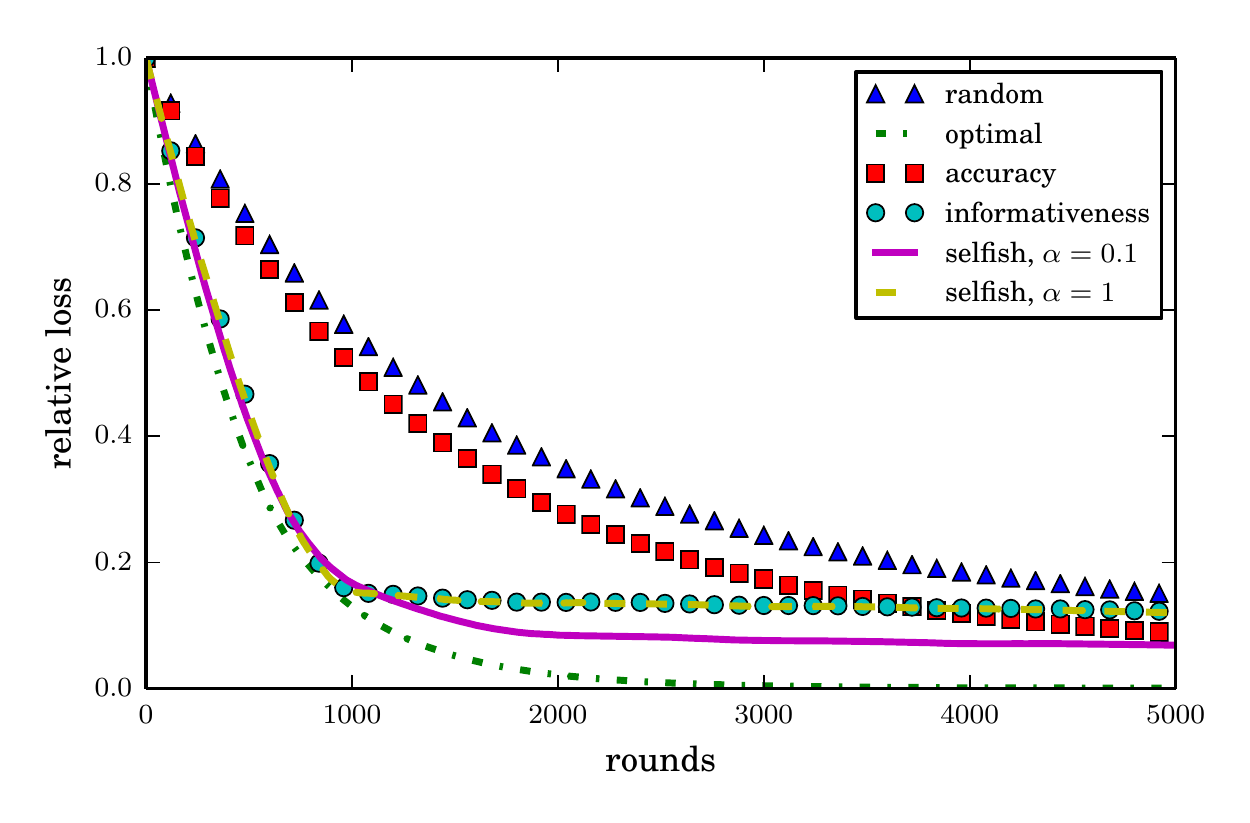}
\caption{Relative global loss in the first user model.}
\label{fig:loss1}
\end{figure*}
    
\begin{figure*}
\centering \includegraphics[scale=0.9]{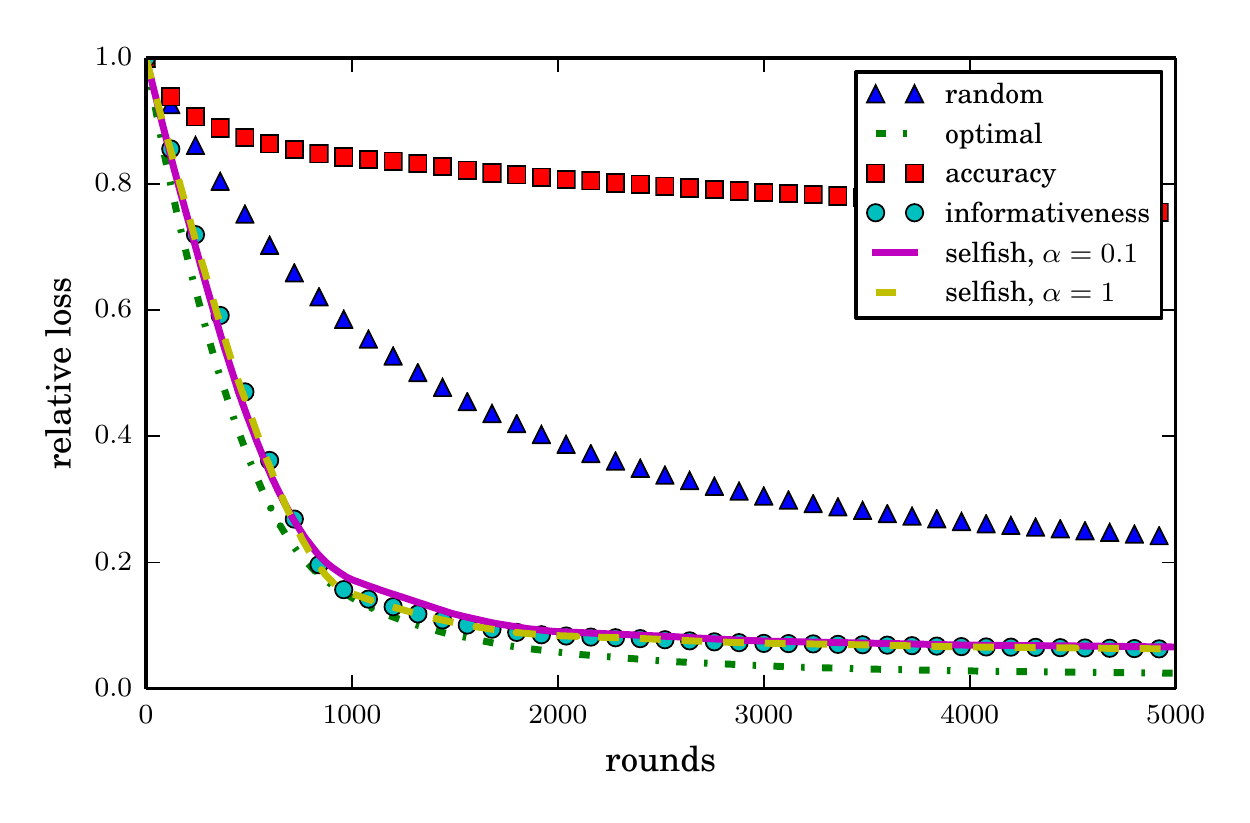}
\caption{Relative global loss in the second user model.}
\label{fig:loss2}
\end{figure*}

\begin{figure*}
\centering \includegraphics[scale=0.75]{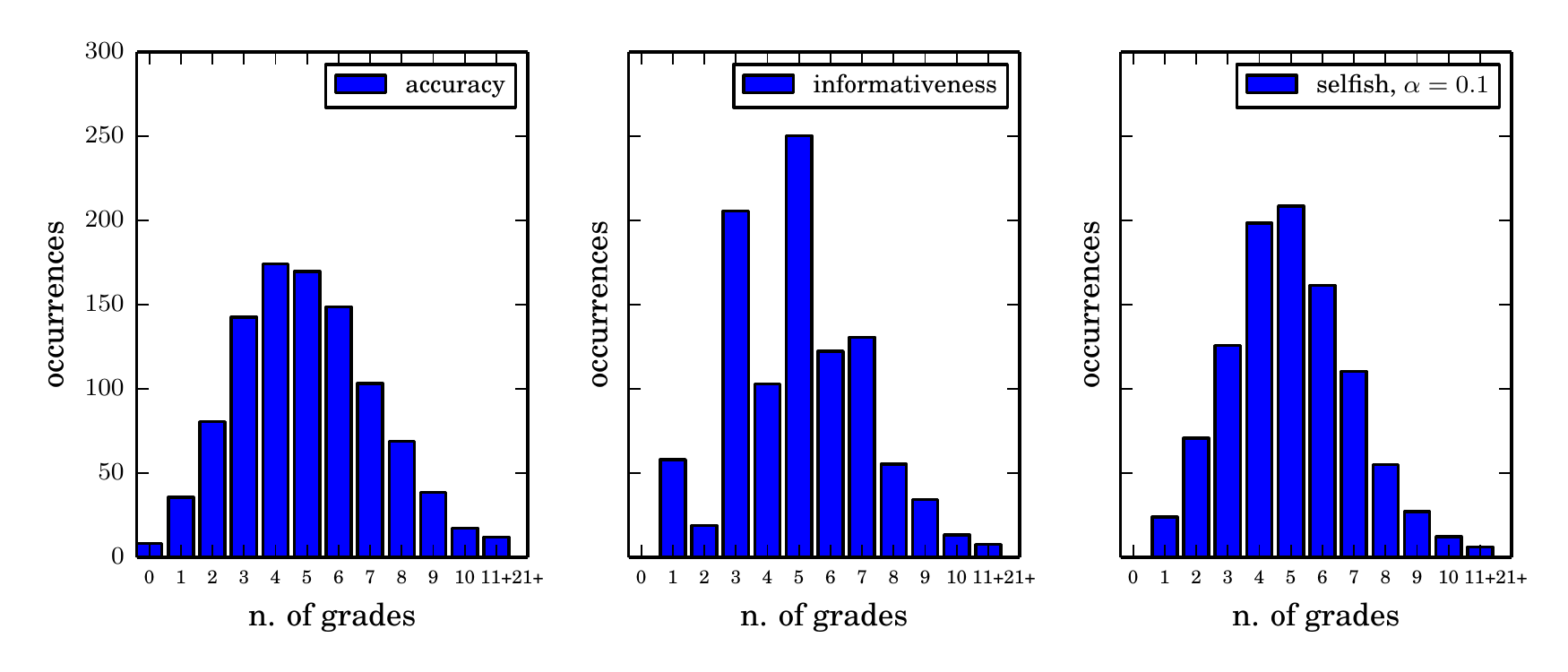}
\caption{Distribution of the number of grades per paper in the first user model.
The labels 11+ and 21+ stand for the intervals $[11,20]$, $[21, \infty)$, respectively.}
\label{fig:n-grades1}
\end{figure*}

\begin{figure*}
\centering \includegraphics[scale=0.75]{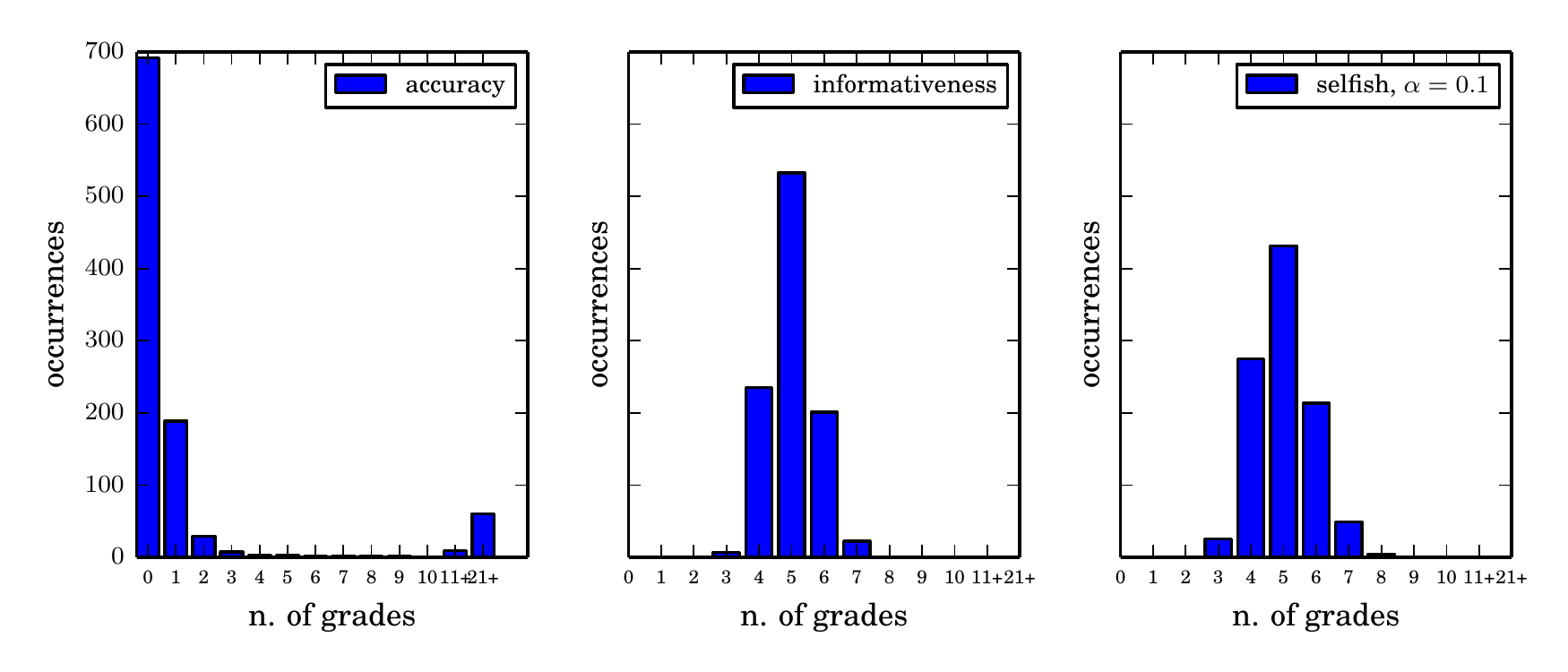}
\caption{Distribution of the number of grades per paper in the second user model.
The labels 11+ and 21+ stand for the intervals $[11,20]$, $[21, \infty)$, respectively.}
\label{fig:n-grades2}
\end{figure*}

\begin{figure*}
\begin{centering}
\begin{tabular}{l||r||r|r||r|r}
choice criterion       &loss  &Pearson &Spearman &rel.\ error &rel.\ error std.\ dev.\\
\hline
random                 &0.24  &-0.005   &-0.006  &0.999   &0.003 \\
accuracy               &0.17  &0.011   &0.011  &0.536   &0.003 \\
informativeness        &0.13  &0.069   &0.063  &1.049   &0.003 \\
optimal                &0.00  &0.015   &0.014  &0.972   &0.003 \\
selfish, $\alpha=0.1$  &0.08  &0.162   &0.161  &0.719   &0.002 \\
selfish, $\alpha=1$    &0.13  &0.070   &0.070  &1.030   &0.004 \\
\end{tabular}
\caption{Summary data for the first user model.
The columns contain: the relative global loss after 3000 reviews;
the Pearson and Spearman correlations between competence and reputation after 5000 reviews;
the average relative error (defined in Section ``Performance Measures'') and its standard deviation across 10 data sets.}
\label{tab:corr1}
\end{centering}
\end{figure*}

\begin{figure*}
\begin{centering}
\begin{tabular}{l||r||r|r||r|r}
choice criterion       &loss   &Pearson     &Spearman &rel.\ error &rel.\ error std.\ dev.\\
\hline
random                 &0.30   &-0.002      &-0.004     &1.002  &0.004\\
accuracy               &0.79   &0.323       &0.228      &0.944  &0.003\\
informativeness        &0.07   &0.020       &0.020      &0.769  &0.002\\
optimal                &0.04   &-0.028      &-0.030     &0.806  &0.004\\
selfish, $\alpha=0.1$  &0.08   &0.157       &0.155      &0.740  &0.004\\
selfish, $\alpha=1$    &0.07   &0.070       &0.067      &0.782  &0.004\\
\end{tabular}
\caption{Summary data for the second user model. 
  The columns have the same interpretation as Figure~\ref{tab:corr1}.}
\label{tab:corr2}
\end{centering}
\end{figure*}

\subsection{Results}

We simulated the behavior of 1000 users evaluating a set of 1000 papers.
Each user holds beliefs on a random subset of 100 papers that the user is willing to review.
We simulate 5000 reviewing rounds. 
We repeated each simulation run 10 times, in order to measure the standard deviation of the results across the runs.

\subsubsection{Results for First User Model}

Figure~\ref{fig:loss1} shows the value of the global loss, relative to the initial global loss,
when paper choice is performed according to the five criteria from Section~\ref{sec:choices}.
The first column of the table in Figure~\ref{tab:corr1} reports the loss after 3000 reviews.
The other columns contain
the Pearson and Spearman correlations between user competence and reputation,
and the user propensity for reviewing papers on which they are most proficient. 

The relative global loss curve shows that the selfish choice, for $\alpha \in \{0.1, 1\}$,
performs very well and close to the optimal choice of papers, especially in the first
1000 rounds of the experiments.
In fact, a closer inspection reveals that, when papers have the default starting score of 0 
and no reviews yet, users simply choose papers with high perceived quality and no reviews,
in order to reap a large informativeness bonus.
Hence, at the beginning many papers go from the default score of 0 to approximately $\frac{M}{2}$,
justifying the initial steep decline in global loss.

Then, consider the curve for the ``informativeness'' choice criterion.
After the first 1000 rounds, when the above phenomenon leads to a near-optimal performance,
the curve is essentially flat. Indeed, when users are only incentivized to provide
informative grades, they will preferably select papers for which they have a very extreme opinion
(very low or very high), leading to oscillation of paper scores, rather than convergence
to the true value.

Notice that the global loss curve for $\alpha=1$ is very close to the one for
the ``informativeness'' choice criterion. 
This is due to the fact that the sigmoid $\fsigm_{1,M}(x)$ stays very close to 1
for relatively large values of $x$.
Roughly speaking, the sigmoid ``forgives'' large accuracy errors.
Hence, even if a user expects a significant accuracy loss, she can count on a reward
almost equal to the expected informativeness bonus.
Specifically, in our experiments we have $M=10$ (grades between 0 and 10) and users have 
a maximum standard deviation of $5$. 
So, their estimate for the accuracy loss is in the range $[0, 25]$, which
corresponds to very limited reward penalties ($\fsigm_{1,10}(25) \approxeq 0.94$).

On the other hand, when $\alpha=0.1$ even a small accuracy loss incurs a significant
penalty on the reward, so the two components of the incentive are properly balanced.
The global loss curve is initially steep and competitively positioned w.r.t.\ both 
the ``optimal'' curve and the curve based on accuracy alone.
This is confirmed by Figure~\ref{tab:corr1}, reporting the relative error 0.719 for this case and
a moderate correlation of 0.161 between competence and final reputation,
higher than all other cases.
Summarizing, data from this user model suggests that a choice of 
$\alpha$ close to 0.1 might be appropriate to the parameters of our populations.



\subsubsection{Results for Second User Model}

Figures~\ref{fig:loss2} and~\ref{tab:corr2} show the relative global loss and the other performance
measures for this model.

It may appear surprising that the choice based on accuracy alone
performs even worse than the random choice.
Indeed, when accuracy is the only incentive, users tend to focus on papers that have already received many reviews, because their quality can be more accurately predicted on the basis of the previous ratings. 
This creates a perverse incentive, in which the papers whose quality is best known draw the most evaluations.
Figure~\ref{fig:n-grades2} confirms that in that case more than 50 papers receive a very large amount of ratings, whereas 700 papers are completely neglected.
The distribution of the number
of grades per paper becomes much more balanced with the selfish choice and $\alpha = 0.1$, 
when the informativeness term mitigates the above issue.

Similarly to the other user model, our incentives with $\alpha = 0.1$ display the best overall
performance, with 8\% global loss after 3000 reviews, positive correlation between competence
and reputation, and low relative error of 0.74, 
proving a clear bias for choosing papers on which the user is particularly competent.

In practice, this set of experiments suggests that the proposed incentive scheme may provide
strong advantages, compared to rewarding accuracy alone, once it has been properly tuned
to the characteristics of the user and paper populations.

Comparing Figures \ref{fig:loss1} and \ref{fig:loss2}, we note that even in the optimal case, the global loss decreases faster for the first user model than for the second one. 
This can be explained by noting that in the first user model, users grade papers according to their individually-formed opinion, without access to other user's reviews.
If $n$ users provide grades for a paper, and the grades are then averaged, the individual opinions of each user account for $1/n$ of the average, which is optimal lacking information on the accuracy of individual users.
In the second user model, instead, users use Bayesian inference to improve the accuracy of their estimate on the basis of reviews of previous users. 
As a consequence, the individual estimate $\hat{q}_i$ of the $i$-th user accounts only 
for $1/i$ of the grade provided by user $i$ (assuming constant variances), 
and for $\frac{1}{n} \cdot \sum_{k=i}^n \frac{1}{k}$ of the complete average grade. 
This non-uniform weighing of the individually formed opinions is not optimal, and slows loss decrease.

\section{Discussion}

In this white paper, we advocate a shift from pre to post-publication peer review for scientific papers.
The chief benefit of post-publication peer review is the more timely circulation of scientific ideas, which can be shared as soon as the authors decide to publish them. 
The key to a successful process of post-publication peer review consists in creating venues where authors are willing to post their papers for review, and where reviewers are incentivized to do useful and fair review work. 

To facilitate this, we are proposing to create a tool, TrueReview, in support of post-publication peer review. 
TrueReview will allow people to set up new venues where papers can be submitted (for example, corresponding to conferences or special topics), as well as venues that index papers appearing in open-access venues such as arXiv. 
To encourage useful and accurate reviews, TrueReview will list with similar prominence both papers and reviewers: the papers will be ranked according to their quality, as assessed by the reviewers, and the reviewers will be ranked in order of the total {\em review bonus\/} they have accrued. 
The review bonus thus works as an incentive for reviewers.

We propose to award review bonus according to a combination of review {\em accuracy\/} and {\em informativeness.}

The accuracy measures the precision of a review's evaluation, in light of future evaluations. 
Judging a review only in view of future ones is instrumental in creating a truthful incentive for reviewers, where expressing their own best judgement on the paper's quality is an optimal strategy. 
Furthermore, measuring the accuracy of a review by comparing it with future reviews only rewards people who discover significant facts about papers, explain them in their review, and thereby influence future reviews. 

The informativeness of a review is a measure of how much the future evaluation of a paper differs from the current one.
Awarding a bonus for informativeness thus creates an incentive for reviewers to select papers who have received no or few reviews, or whose reviews are grossly imprecise. 
As the informativeness of a review is unrelated to the rating expressed in the review itself, including informativeness in the bonus does not alter the truthful nature of the incentive schemes. 

We combine the accuracy and informativeness schemes in a multiplicative fashion, such that reviewers need to be {\em both\/} accurate and informative in order to obtain a bonus. 
This prevents lazy review strategies, such as picking papers with a large number of reviews and simply restating the consensus opinion on these papers (accurate but not informative), or picking new papers and just entering a random review (informative but not accurate).

We have experimented with two users models: one in which users base their review on their opinion only, and another in which they examine and account for previous ratings, before forming their opinion of the paper's quality. 
For both user models, our experiments show that the review bonus that combines both informativeness and accuracy is superior to considering either informativeness or accuracy alone, and is superior also to offering no specific bonus, and resorting on simpler methods such as simply counting how many reviews each user has provided.

\bibliography{paper-review}
\bibliographystyle{alpha}

\end{document}